\def\ket#1{|#1\rangle}
\def\tr{\mathrm{tr}}
\newcommand{\Ec}{\ensuremath{\mathcal{E}}}
\newcommand{\idc}{\ensuremath{ \mathrm{Id} }}
\newcommand{\id}{\ensuremath{ \mathcal{I} }}
\renewcommand{\tr}[1]{\ensuremath{  \operatorname{tr}\! \left[ #1 \right]  }}
\newcommand{\db}[3][]{\mathop{}\! {d_{\mathrm{B}}}^{#1}\left(#2
    \,,#3\right)}
\newcommand{\dens}[2]{\ensuremath{ \vert \, #1 \, \rangle \langle \, #2 \, \vert}} 
\newcommand{\ddens}[1]{\dens{#1}{#1}} 
\newcommand{\est}[1]{\hat{#1}_{\mathrm{est}}  }
\newcommand{\ens}[1]{\ensuremath{ {\left\{ #1 \right\}}} }
\def\ket#1{|#1\rangle}
\newcommand{\ii}{\ensuremath{ \mathrm{i\,} }}
\newcommand{\e}[1]{\ensuremath{ \mathrm{e}^{#1}}}
\newcommand{\Vstate}[2]{\ensuremath{\operatorname{Var}[#1,#2]}}
\newtheorem{theorem}{Theorem}[]
\newtheorem{proposition}[]{Proposition}
\newtheorem{lemma}[]{Lemma}
\newcommand{\lb}{\ensuremath{\lbrace}}
\newcommand{\rb}{\ensuremath{\rbrace}}
\newcommand{\ve}{\ensuremath{\varepsilon}}
\newcommand{\mc}[1]{\mathcal{#1}}
\newcommand{\iid}{{}}
\def\iid/{\emph{i.i.d.}~}
\newcommand{\ie}{{}}
\def\ie/{\emph{i.e.}}
\newcommand{\crr}{}
\def\crr/{Cram\'er-Rao}
\newcommand{\csch}{}
\def\csch/{Cauchy¡VSchwarz}
\newcommand{\qfi}[2]{\mathop{}\! I(#1 \,;#2)}
\newcommand{\cqfi}[2]{\mathop{}\! C(#1 \,;#2)}
\newcommand{\com}[2]{\ensuremath{ \left[  #1 , #2  \right] }}
\newcommand{\norm}[1]{\ensuremath{ \left\Vert  #1   \right\Vert }} 
\newcommand{\pnorm}[2]{\ensuremath{ \Vert  #1  \Vert_{#2} }} 
\newcommand{\absv}[1]{\ensuremath{ \left\vert   #1 \right\vert }} 
\newcommand{\diff}[1][]{\mathop{}\!\mathrm{d_{#1}}}
\newcommand{\frd}[3][]{%
  \frac{%
   \diff \ifx\\#1\\\else^{#1}\fi #2
  }{%
    \diff #3 \ifx\\#1\\\else^{#1}\fi
  }%
}
\newcommand{\frp}[3][]{%
  \frac{%
    \partial \ifx\\#1\\\else^{#1}\fi #2
  }{%
    \partial #3 \ifx\\#1\\\else^{#1}\fi
  }%
}
\begin{document}
\title{Hamiltonian extensions in quantum metrology}
\author{Julien Mathieu Elias Fra\"isse$^{1}$  and Daniel Braun$^{1}$ }
\affiliation{$^1$ Eberhard-Karls-Universit\"at T\"ubingen, Institut
  f\"ur Theoretische Physik, 72076 T\"ubingen, Germany}

\begin{abstract} 

We study very generally to what extent the uncertainty with which a
phase shift can be estimated in quantum metrology can be reduced by
extending the Hamiltonian that generates the phase shift to an ancilla
system with a Hilbert space of arbitrary dimension, and allowing
arbitrary interactions between the original 
system and the ancilla.   Such Hamiltonian extensions provide a general
framework for open quantum systems, as well as for ``non-linear
metrology schemes'' that have been investigated over the last few
years.
We prove that such 
Hamiltonian extensions  cannot improve the sensitivity of the phase
shift measurement when considering the quantum Fisher information
optimized over input states.     
\end{abstract}

\pacs{03.67.-a,03.67.Lx}
\maketitle

\section{ \label{intro}Introduction}
Quantum metrology is concerned with the question of the ultimate
precision with which certain parameters that characterize a physical 
system can be estimated based on measurements of the system. Such
ultimate bounds arise from the quantum noise linked to the 
fundamental quantum mechanical nature of
 any  physical systems. 
  At the same time, there are
situations where quantum 
mechanical effects such as entanglement or quantum interference can
enhance the precision in certain parameter estimation schemes (see
\cite{giovannetti_advances_2011,toth_quantum_2014} for
recent reviews on such ``quantum enhanced measurements'').  
The typical situation is the following: We are
given a state $\rho(\theta)$ (or a collection of states) that depends
on a parameter of interest $\theta$. We suppose the form of the state
completely known but not the value of $\theta$ which we want to
estimate. ``Estimate'' rather than ``measure'' refers to the fact that
$\theta$ may not correspond to 
any observable of the system, which implies that one first needs to
measure some other observable, and then infer the value of $\theta$ from the 
measurement results. The tools of quantum parameter estimation
theory provide different figures of merit to quantify with which
precision we can estimate the parameter. Among these figures of merit,
the Quantum Fisher Information (QFI) is known
\cite{helstrom_quantum_1969,Holevo1982,Braunstein94,braunstein_generalized_1996}
to lead to an ultimate bound on 
the uncertainty of an unbiased estimator of $\theta$ (see 
\eqref{qfi_inequality0} for 
a precise formulation). 

From a physical point of
view, it is worthwhile  to consider the dynamics that imprints 
 the
parameter on the state. We thus start by an input state independent of
the parameter, 
propagate it with a quantum channel
$\mc{E}_\theta$ that depends on $\theta$,  and then look at the
metrological properties of the output state. This is known as channel
estimation. In such a framework, the object that we consider known and
given is the channel $\mc{E}_\theta$, and we have the
freedom to still optimize over the input states. 

It has been noticed early that in this channel estimation scheme, the
use of entanglement can lead to an improvement in the precision of the
estimation. By introducing an ancilla and entangling it with the
initial probe \emph{but still acting with the channel only on the
  initial probe}, \ie/~using $\mc{E}_\theta\otimes \idc$, an increase
of the QFI can be observed for certain channels
\cite{fujiwara_quantum_2001,fujiwara_quantum_2003,fujiwara_estimation_2004}. This  
is known as ``channel 
  extension'' and we call a channel of the form $\mc{E}_\theta\otimes \idc$ ``extended
  channel''. The quantum channel $\mc{E}_\theta$ can be used in 
parallel protocols, sequentially,  or as extended channel as described, but
we still always use $\mc{E}_\theta$ to imprint the parameter. This is
a natural point of view in quantum information as there the dynamics is
described by quantum channels.  

A more physical point of view is that the fundamental
physical object used to imprint the parameter on the state is not
directly the channel but a given Hamiltonian $H(\theta)$. Obviously, to
this Hamiltonian corresponds an evolution operator that gives rise to
a unitary channel and we then go back to the channel estimation
case. But when considering the concept of extensions we get a
fundamental difference. Indeed, the natural way to extend an
Hamiltonian is to introduce also an ancillary system, but then to add
an Hamiltonian which allows interactions between both systems. We call
such extensions ``Hamiltonian extensions'', and the corresponding
channels ``Hamiltonian-extended channels''.
These extensions
describe a different situation than the one in channel extension,
since there no interaction was allowed between the original system and
the ancilla used for the extension.  

In the present work we study in all generality the case of Hamiltonian
extensions for a phase shift Hamiltonian of the form $\theta G$. The
important question is whether
such extensions can lead to an
increased precision in the estimation process when optimizing over
the input states. We show that this is not the case. Interestingly, in
order to show this result for Hamiltonian
  extensions, we use a powerful theorem developed by Fujiwara and
Imai on channel extensions, but only as a technical tool. The
great generality of the situation described by Hamiltonian-extended
phase shifts allows us to investigate some questions of quantum enhanced
measurement. Notably, as the original phase shift may act already on a
collection of subsystems, our bound can serve to 
investigate the effect of non-linear interactions  
\cite{luis_nonlinear_2004,beltran_breaking_2005,luis_quantum_2007}. 
The ancilla system may also be a heat-bath or a
quantum bus, such that Hamiltonian extensions cover   
``decoherence-enhanced measurements''
\cite{braun_heisenberg-limited_2011} or 
``coherent averaging'' \cite{fraisse_coherent_2015}, too, as long the
spectrum of all generators is bounded (see, however, the discussion of
unbounded spectra in
the Conclusions).

\section{Optimal channel estimation}
\subsection{Notation}

Let $\mc{B}=\mc{B}(\mc{H})$ be the space of bounded linear operators
acting on a 
Hilbert space $\mc{H}$ of dimension $d$.  A quantum channel $\Ec$
is a \emph{completely positive trace 
  preserving } (CPTP) convex-linear map $\Ec:\mc{B}\to\mc{B}$ 
that maps a density matrix (\emph{i.e.}~a positive linear
operator with trace one)  to another density matrix,
$\rho \mapsto 
\sigma$. 
``Complete positivity'' means that 
the channel should be a positive map (\emph{i.e.}~maps positive operators to
positive ones), but also that the extension $
\Ec \otimes \idc$ of the 
channel  to ancillary Hilbert spaces $\mc{\tilde{H}}$, where it acts
by the identity 
operator, should be a positive map, {\em i.e. }($ \Ec \otimes \idc
)(A) \geq 0$  for any positive operator  
$A$ in  $\mc{B}(\mc{H}
\otimes \mc{\tilde{H}})$,
the space of bounded operator acting on the bipartite system $\mc{H}
\otimes \mc{\tilde{H}}$ 
\cite{nielsen_quantum_2011}. Trace preservation is defined as 
$\tr{\Ec(\rho)}=\tr{\rho}$, and convex linearity as  $\Ec(\sum_i p_i
\rho_i) =\sum_i p_i \Ec(\rho_i) $ for all $p_i$ with $0\le p_i\le 1$
and $\sum_i p_i=1$.
According to Kraus' theorem, a quantum channel can be represented as 
\begin{equation}
\Ec(\rho)=\sum_{i=1}^q A_i \rho A_i^\dagger\,,\label{eq:Kraus}
\end{equation}
where the set of $q$ Kraus operators $\mc{A}=\lbrace A_i \rbrace_{i=1,\ldots,q}$ is called a $q$-Kraus
decomposition of $\Ec$, and $\sum_{i=1}^q 
A_i^\dagger A_i =\id$, the identity
operator on the Hilbert space $\mc{H}$  \cite{Kraus83}.
The
 Kraus representation \eqref{eq:Kraus} is not unique: Giving a
 reference $q$-Kraus decomposition $\mc{A}(\theta)=\ens{ A_j(\theta)
 }_{j=1,\ldots,q}$ of a channel $\mc{E}_\theta$, 
  we can construct all the other $q$-Kraus decompositions through the
 unitary matrices of size $q$,
\begin{equation} 
\ens{B_j(\theta)=\sum_k u_{jk}(\theta)A_k(\theta)}_{j=1,\ldots,q}\;,
\end{equation}
with $u_{ij}(\theta)=(U(\theta))_{ij}$ a unitary matrix.
The set of all $q$-Kraus decompositions of a channel is called the
$q$-Kraus ensemble and is noted $\mathscr{A}_q$. The smallest
possible number $q$ of Kraus operators is known as ``Kraus rank''. It
can be obtained as the number of non-vanishing eigenvalues of the
Choi-matrix of the channel (see \cite{Bengtsson06}).

\subsection{Quantum Fisher Information}
Quantum parameter estimation theory (q-pet)  provides a powerful tool for
calculating the smallest uncertainty achievable when estimating a
parameter $\theta$ encoded in a state $\rho(\theta)$. 
Central object in the theory is the quantum  Fisher
information which enters in the quantum Cram\'er-Rao bound
(QCRB).  
We first review QFI for a state and then consider channel estimation. 

\subsubsection{QFI for a quantum state} 
The QCRB provides a lower bound on the variance of an 
unbiased estimator $\est{\theta}$ of $\theta$. Its importance arises
from the facts that it is optimized 
already over all possible POVM measurements  (measurements that
include and generalize projective von Neumann measurements
to account for quantum-probes to which the quantum
system is coupled and which are then measured via projective
von-Neumann measurements \cite{Peres93}), and 
  all possible 
data analysis schemes in the form of unbiased estimators.  These are
estimators  that on the average give back the true
value of the parameter. 
The QCRB is given by  
 \begin{equation}\label{qfi_inequality0}
 \mathrm{Var}( \est{\theta} ) \geq \frac{1}{M \qfi{\rho(\theta)}{\theta}}\,,
\end{equation}
with $M$ the number of independent measurements and
$\qfi{\rho(\theta)}{\theta}$ the 
 quantum Fisher information (QFI). The QCRB is reachable 
asymptotically in the limit of an infinite number of
measurements. 

The QFI is given by $
 \qfi{\rho(\theta)}{\theta} =\tr{L_\theta^2 \rho(\theta)}$, where the symmetric
 logarithmic derivative $L_{\theta}$ is defined implicitly by
 $2 \diff \rho({\theta})/\diff \theta=L_{\theta}\rho(\theta)
 +\rho(\theta)L_{\theta}$. 
In \cite{Braunstein94} it was shown that $\qfi{\rho(\theta)}{\theta}$ is linked to the distance between the two infinitesimally closed states 
$\rho(\theta)$ and $\rho(\theta+ d \theta)$. More specifically we
have $\lim_{d \theta \to 0} \db{\rho(\theta)}{\rho(\theta+d\theta)}^2/d\theta^2= 
\qfi{\rho(\theta)}{\theta}/4$, where the Bures
distance $d_{\mathrm{B}}$ between two states $\sigma$ and $\tau$ is
defined as 
$\db{\sigma}{\tau}=(2-2\tr{(\sqrt{\tau}\sigma\sqrt{\tau})^{1/2}})^{1/2}$. 
The QCRB thus offers the physically
intuitive picture that the parameter $\theta$ can be measured the more
precisely the more strongly the state $\rho(\theta)$ depends on
it. 

The QFI enjoys some very useful properties. First it is monotonous
under $\theta$-independent channels $\mathcal{E}$ 
\begin{equation} \label{eq:mon}
  I(\mathcal{E}(\rho(\theta));\theta)\leq I(\rho(\theta);\theta)\,,
\end{equation}
with equality for $\theta$-independent unitary channels
$\mathcal{U}$\cite{Petz_monotone_1996} (see
eq.\eqref{eq:def_unitary_channels} for the definition of a unitary
channel).
The QFI is also convex, meaning that for two
density matrices $\rho(\theta)$ and $\sigma(\theta)$ and  $0\leq
\lambda \leq 1$ we have \cite{fujiwara_quantum_2001} 
\begin{equation}\label{eq:convexity_QFI}
I(\lambda \rho(\theta) +(1-\lambda) \sigma(\theta); \theta) \leq \lambda  I(\rho(\theta); \theta)+(1-\lambda) I( \sigma(\theta); \theta)\,.
\end{equation}
Finally, the QFI is additive, in the sense that
\begin{equation}\label{eq:additivity_qfi}
I(\rho(\theta)\otimes \sigma(\theta);\theta )=I(\rho(\theta);\theta
)+I( \sigma(\theta);\theta )\,. 
\end{equation}

\subsubsection{Channel QFI }
When we want to know how precisely the parameter 
characterizing a
quantum channel can be estimated, we have the additional freedom of
optimizing over the input state.  We define the \emph{channel quantum
  Fisher information} $\cqfi{\mc{E}_\theta}{\theta}$ of a channel
$\mc{E}_\theta$ as 
\begin{equation}
\cqfi{\mc{E}_\theta}{\theta}=\max_{\rho \in  \mc{B}(\mc{H})}  \qfi{\mc{E}_\theta(\rho)}{\theta}\;.
\end{equation}
Due to the convexity of the QFI,  it is enough to maximize over the pure states,
\begin{equation}
\max_{\rho \in  \mc{B}(\mc{H})}
\qfi{\mc{E}_\theta(\rho)}{\theta}=\max_{\ket{\psi} \in \mc{H}}
\qfi{\mc{E}_\theta(\ddens{\psi})}{\theta}\;. 
\end{equation}

\subsubsection{Extensions of a quantum channel}
In this paper we are interested in the Hamiltonian extension, which
differs from channel extension. Nevertheless, channel extensions are
needed in our calculation as a technical tool, and we thus
start by presenting how this works. 

As quantum channels are \emph{completely} positive trace preserving
maps it is natural to consider extensions of channels as 
\begin{equation}
\mc{E}_\theta \rightarrow \mc{E}_\theta\otimes \mc{A}\,,
\end{equation}
where $\mc{A}$ is an arbitrary channel acting on an ancilla system.
Extensions can be written as a concatenation,
\begin{equation}
  \mc{E}_\theta \otimes \mc{A}=(\mc{E}_\theta\otimes \idc)\circ(\idc\otimes \mc{A})=(\idc\otimes \mc{A})\circ(\mc{E}_\theta\otimes \idc)\;.
\end{equation}
Using the monotonicity of the QFI we have
\begin{equation}
\qfi{(\mc{E}_\theta\otimes \mc{A})(\rho)}{\theta}=\qfi{(\idc\otimes \mc{A})\circ(\mc{E}_\theta\otimes \idc)(\rho)}{\theta} \leq \qfi{(\mc{E}_\theta\otimes \idc)(\rho)}{\theta}
\end{equation}
and the equality is achieved when $(\idc\otimes \mc{A})$ is a unitary
channel, \ie/~when $\mc{A}$ is a unitary channel
\cite{fujiwara_quantum_2001}. 
 Hence, in terms of 
estimation, 
it is enough to consider extensions by the identity, and with
\emph{``channel extension'', we will always refer to extension by
  the identity}. 
The situation is depicted in figure
\ref{fig:channel_extension}.  
In certain cases it was noticed that this allows a
better estimation of the parameter, although we act with the identity
on the ancillary Hilbert space
\cite{fujiwara_quantum_2001,fujiwara_quantum_2003,fujiwara_estimation_2004}.

\begin{figure}
\centering\includegraphics[scale=0.8]{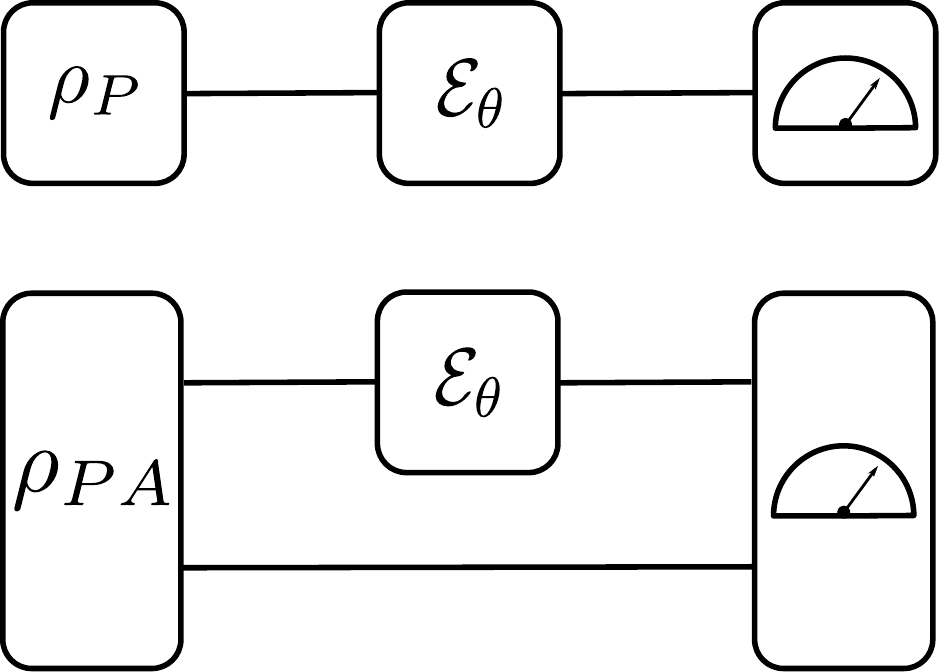}
\caption{Channel extension for an arbitrary channel
  $\mc{E}_\theta$. Top scheme: original channel of the probe (P). Bottom scheme:
  channel extension of $\mc{E}_\theta$ to an ancilla (A) on which one
  acts with the identity operation. }\label{fig:channel_extension} 
\end{figure}

Fujiwara and Imai provided a theorem to calculate the channel QFI of
an extended channel in an efficient way:
\begin{theorem}[Channel QFI of extended channels
  \cite{fujiwara_fibre_bundle_2008}]\label{thm:Fujiwara_Imai} 
For a one parameter family of quantum channels $\{ \mc{E}_\theta \}$
and for any natural number $q$ such that $q \geq
\text{rank}(\mc{E}_\theta)$, we have 
\begin{equation}
 \cqfi{\mc{E}_\theta \otimes \idc}{\theta}=4 \min_{ \mc{A}(\theta) \in \mathscr{A}_q }\pnorm{\sum_{j=1}^q \dot{A}_j^\dagger(\theta) \dot{A}_j(\theta)}{\infty}\;,
\end{equation}
with $\mc{A}(\theta)=\ens{ A_j(\theta) }_{j=1,\ldots,q}$,
$\dot{A}_j(\theta)=\diff A_j(\theta)/\diff \theta$, 
and  where
$\pnorm{\bullet}{\infty}$ is the infinity norm of $\mc{H}$.  
\end{theorem}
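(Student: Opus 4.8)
The plan is to combine the purification representation of the quantum Fisher information with a minimax exchange between input states and Kraus decompositions.

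\emph{Step 1 (reduction to states of the probe).} By convexity of $\qfi{\cdot}{\theta}$ it suffices to evaluate $\cqfi{\mc{E}_\theta\otimes\idc}{\theta}$ on pure inputs $\ket{\psi}\in\mc{H}\otimes\tilde{\mc{H}}$, and taking $\dim\tilde{\mc{H}}=d$ loses nothing because the QFI of the output depends on $\ket{\psi}$ only through $\rho:=\trpart{\tilde{\mc{H}}}{\ddens{\psi}}$ (two purifications of $\rho$ differ by a $\theta$-independent unitary on $\tilde{\mc{H}}$, under which the QFI is invariant), and any $\rho$ on $\mc{H}$ is so realized. Fixing a reference $q$-Kraus decomposition of $\mc{E}_\theta$, the vector $\ket{\Psi_\theta}=\sum_{j=1}^{q}\paran{A_j(\theta)\otimes\idc}\ket{\psi}\otimes\ket{j}$ is a purification of $\sigma_\theta:=(\mc{E}_\theta\otimes\idc)(\ddens{\psi})$ onto a $q$-dimensional environment, and \emph{every} purification of $\sigma_\theta$ onto a $q$-dimensional environment is of this form for some $\mc{A}(\theta)=\ens{A_j(\theta)}\in\mathscr{A}_q$, since purifications onto a fixed reference differ by a (here $\theta$-dependent) unitary on it --- precisely the gauge freedom of $\mathscr{A}_q$. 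I would then invoke the purification identity $\qfi{\sigma_\theta}{\theta}=4\min_{\ket{\Psi_\theta}}\pscal{\dot{\Psi}_\theta}{\dot{\Psi}_\theta}$, which follows from monotonicity of the QFI under the partial trace [Eq.~\eqref{eq:mon}], the pure-state formula for the QFI, and the fact that an optimal purification fits on a reference of dimension $\mathrm{rank}(\sigma_\theta)\le\mathrm{rank}(\mc{E}_\theta)\le q$. Since $\pscal{\dot{\Psi}_\theta}{\dot{\Psi}_\theta}=\bra{\psi}\paran{T_{\mc{A}}\otimes\idc}\ket{\psi}=\tr{T_{\mc{A}}\,\rho}$ with $T_{\mc{A}}:=\sum_{j=1}^{q}\dot{A}_j^\dagger(\theta)\dot{A}_j(\theta)\ge 0$, this gives
\begin{equation}
\cqfi{\mc{E}_\theta\otimes\idc}{\theta}=4\,\max_{\rho}\ \min_{\mc{A}(\theta)\in\mathscr{A}_q}\ \tr{T_{\mc{A}}\,\rho}\,,
\end{equation}
the maximum running over density matrices of $\mc{H}$.

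\emph{Step 2 (minimax exchange).} Parametrizing $\mathscr{A}_q$ near $\theta$ by $B_j=\sum_k u_{jk}(\theta)A_k$ with $U(\theta)$ unitary and putting $h:=-\ii\,U^\dagger\dot{U}=h^\dagger$, a short computation yields $T_{\mc{A}}=\sum_l M_l(h)^\dagger M_l(h)$ with $M_l(h):=\dot{A}_l+\ii\sum_k h_{lk}A_k$ \emph{affine} in $h$, and $h$ ranges over all Hermitian $q\times q$ matrices as $\mc{A}(\theta)$ ranges over $\mathscr{A}_q$. As $X\mapsto X^\dagger X$ is operator convex, $h\mapsto T_{\mc{A}}$ is operator convex, so $h\mapsto\tr{T_{\mc{A}}\,\rho}$ is convex for every $\rho\ge 0$, while $\rho\mapsto\tr{T_{\mc{A}}\,\rho}$ is linear on the compact convex set of density matrices of $\mc{H}$. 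Sion's minimax theorem (for which compactness of the $\rho$-side suffices) then gives $\max_{\rho}\min_{\mc{A}(\theta)}\tr{T_{\mc{A}}\,\rho}=\min_{\mc{A}(\theta)}\max_{\rho}\tr{T_{\mc{A}}\,\rho}=\min_{\mc{A}(\theta)\in\mathscr{A}_q}\pnorm{T_{\mc{A}}}{\infty}$, using $\max_\rho\tr{X\rho}=\pnorm{X}{\infty}$ for $X\ge 0$; together with the previous display this is the asserted identity.

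The hard part is making the minimax step airtight. The domain of $h$, the Hermitian $q\times q$ matrices, is not compact, so one must verify that a minimax theorem genuinely applies in this asymmetric situation and, in order to write $\min$ over $\mathscr{A}_q$ rather than $\inf$, show that the infimum is attained: the trace-preservation constraint $\sum_k A_k^\dagger A_k=\idc$ together with the linear independence of a minimal Kraus set controls $\pnorm{T_{\mc{A}}}{\infty}$ for large $\norm{h}$ and supplies coercivity transverse to the flat directions of $T_{\mc{A}}$, while padding a minimal decomposition with null Kraus operators shows the value is independent of the admissible $q\ge\mathrm{rank}(\mc{E}_\theta)$. A secondary point, to be stated with care rather than being genuinely deep, is the precise form of the purification bound for a mixed family and the identification of its gauge freedom with $\mathscr{A}_q$; everything else is routine bookkeeping.
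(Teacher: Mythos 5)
The paper contains no proof of this statement: Theorem~\ref{thm:Fujiwara_Imai} is imported from Fujiwara and Imai \cite{fujiwara_fibre_bundle_2008} and used purely as a tool, so there is nothing in-paper to compare against. Your argument is, in essence, the original one from that reference: reduce to pure inputs on $\mc{H}\otimes\tilde{\mc{H}}$ with $\dim\tilde{\mc{H}}=d$, identify the $q$-Kraus gauge orbit with the family of purifications of the output onto a $q$-dimensional environment, write $\cqfi{\mc{E}_\theta\otimes\idc}{\theta}=4\max_\rho\min_{\mc{A}}\tr{T_{\mc{A}}\rho}$ with $T_{\mc{A}}=\sum_j\dot A_j^\dagger(\theta)\dot A_j(\theta)$, and exchange the max and min using convexity of $h\mapsto T_{\mc{A}(h)}$ along the gauge orbit. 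The parametrization $M_l(h)=\dot A_l+\ii\sum_k h_{lk}A_k$ is correct, Sion's theorem does only require compactness on the $\rho$ side, and attainment of the infimum follows from $\pnorm{T_{\mc{A}}}{\infty}\ge\tr{T_{\mc{A}}}/d$ together with linear independence of a minimal Kraus set (note that it is this linear independence, not trace preservation, that supplies coercivity; for $q$ larger than the Kraus rank one passes to the quotient by the flat directions, on which $T$ is constant).

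The one place where you misjudge the weight of a step is the purification identity $\qfi{\sigma_\theta}{\theta}=4\min_{\Psi_\theta}\pscal{\dot\Psi_\theta}{\dot\Psi_\theta}$. Monotonicity under the partial trace plus the pure-state formula give only $\qfi{\sigma_\theta}{\theta}\le 4\pscal{\dot\Psi_\theta}{\dot\Psi_\theta}$ for \emph{every} purification; the reverse direction requires exhibiting a purification that saturates the bound, namely one satisfying $\ket{\dot\Psi_\theta}=\tfrac{1}{2}(L_\theta\otimes\idc)\ket{\Psi_\theta}$ at the point of interest, and checking that it is reachable by a (smooth, $\theta$-dependent) unitary gauge on an environment of dimension $\mathrm{rank}(\sigma_\theta)$. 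This saturation is the analytic heart of the theorem --- without it your argument only yields the upper bound $\cqfi{\mc{E}_\theta\otimes\idc}{\theta}\le 4\min_{\mc{A}}\pnorm{T_{\mc{A}}}{\infty}$ --- so it should not be dismissed as a ``secondary point''; it is precisely the lemma Fujiwara and Imai prove. With that lemma supplied, your proof is complete and coincides with the standard one.
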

The infinity norm $\pnorm{\bullet}{\infty}$ is also known as the
operator norm, or the spectral norm. It is defined
as 
$\pnorm{A}{\infty} =\max\lb \norm{Au}:u \in \mc{H},\norm{u}=1 \rb$
where $\norm{\bullet}$ corresponds to the usual Euclidean norm in
$\mc{H}$. The infinity norm obeys the submultiplicativity property 
\begin{equation}\label{eq:prop_infty_norm_1}
\pnorm{XY}{\infty}\leq \pnorm{X}{\infty}\pnorm{Y}{\infty}\;,
\end{equation}
and also 
\begin{equation}\label{eq:prop_infty_norm_2}
\pnorm{X^\dagger X}{\infty} =  \pnorm{X}{\infty}^2\;.
\end{equation}

\section{Hamiltonian extensions}
We now come to the core of this paper, namely the concept of
  Hamiltonian extensions. Since in this framework we want  to describe
  the dynamics of a system by Hamiltonians, the corresponding channels
  are unitary channels.
A unitary channel $\mc{U}_{H}$ is defined as
\begin{equation}\label{eq:def_unitary_channels}
\mc{U}_{H}(\rho)=U_{H} \rho {U_{H}}^\dagger\;,
\end{equation}
with $U_{H} = \e{-\ii H(\theta)}$ a unitary matrix parametrized by
$\theta$. 
The specific case of  phase shift channels $\mc{U}_{\theta G}(\rho)$  is given by 
\begin{equation}
\mc{U}_{\theta G }(\rho)=U_{\theta G } \rho {U_{\theta G }}^\dagger\;,
\end{equation} 
with  $U_{\theta G } = \e{-\ii \theta G}$ where $G$ is the
generator of the phase shift. Throughout this paper we will
  consider only generators that have a bounded spectrum.

To extend this Hamiltonian we first introduce 
 an ancillary system
with a Hilbert space of arbitrary dimension $d'$, and 
then add an arbitrary new Hamiltonian
$H_{\mathrm{int}}$ that acts on both subsystems. 
We thus get our \emph{Hamiltonian-extended phase shift} (see figure
\ref{fig:hamiltonian_extension})
\begin{equation}\label{eq:def_ext_ph_shift}
G_{\mathrm{ext}}(\theta)= \theta
G\otimes \id + H_{\mathrm{int}}\;.
\end{equation}
Notice that in particular $H_{\mathrm{int}}$ can contain also a part that acts on
the second subsystem alone, \ie/~in the language of open quantum
systems, one can identify $H_{\mathrm{int}}$ in eq.\eqref{eq:def_ext_ph_shift}
with the sum of the usual interaction Hamiltonian $H_{\mathrm{int}}$
and the Hamiltonian of the environment $H_{\rm env}$. Importantly, the
channel corresponding to the Hamiltonian-extended phase shift does not
correspond trivially to a channel extension, as there one acts only
with the identity operator on the ancillary system.\\ 

\begin{figure}
\centering\includegraphics[scale=0.8]{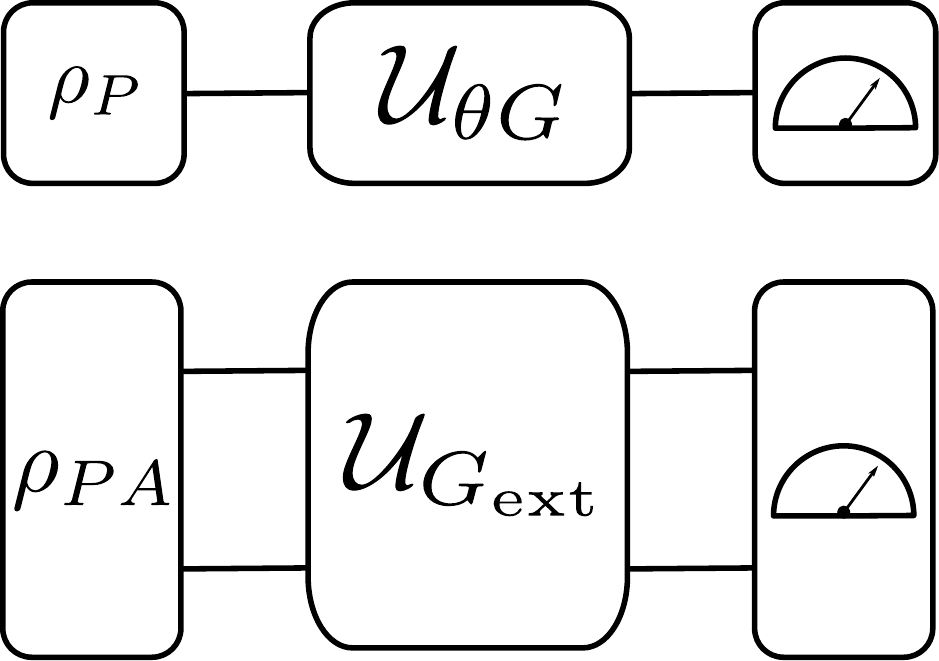}
\caption{Hamiltonian extension for a phase shift channel
  $\mc{U}_{\theta G}$. Top scheme: original phase shift
  channel. Bottom scheme: Hamiltonian-extended 
  phase-shift channel (see eq.\eqref{eq:def_ext_ph_shift}).}\label{fig:hamiltonian_extension}
\end{figure}

The important question is whether the unitary channel
corresponding to the Hamiltonian-extended phase shift $\mc{U}_{G_{\mathrm{ext}}}$
can have a greater channel QFI than the original phase shift channel
QFI. \emph{I.e.}~we have to compare $\cqfi{\mc{U}_{G_{\mathrm{ext}}}}{\theta}$ and
$\cqfi{\mc{U}_{\theta G}}{\theta}$. 
The answer is given by the following theorem, which is the main result
of this paper:
\begin{theorem}[Channel QFI for Hamiltonian-extended phase
  shift]\label{thm:main} 
Let $\mc{U}_{\theta G}$ be a phase shift channel and $\mc{U}_{G_{\mathrm{ext}}}$ the corresponding  Hamiltonian-extended channel (eq.\eqref{eq:def_ext_ph_shift}).
Then the channel QFI of the Hamiltonian-extended phase shift channel is
bounded by the channel QFI of the original phase shift channel:
\begin{equation}\label{eq_main_theorem}
\cqfi{\mc{U}_{G_{\mathrm{ext}}}}{\theta} \leq \cqfi{\mc{U}_{\theta G}}{\theta}\;.
\end{equation}
\end{theorem}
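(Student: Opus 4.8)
The strategy is to reduce both sides of \eqref{eq_main_theorem} to the same explicit quantity built from the spectrum of $G$. Write $g_{\max}$ and $g_{\min}$ for the supremum and infimum of $\mathrm{spec}(G)$, which are finite because $G$ has bounded spectrum. For the bare phase shift the output state is $\e{-\ii\theta G}\ket{\psi}$, whose QFI equals $4\,\mathrm{Var}_{\psi}(G)$; since the maximum over pure states of the variance of an observable equals a quarter of its squared spectral width, $\cqfi{\mc{U}_{\theta G}}{\theta}=(g_{\max}-g_{\min})^{2}$. It therefore suffices to establish $\cqfi{\mc{U}_{G_{\mathrm{ext}}}}{\theta}\le(g_{\max}-g_{\min})^{2}$.

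The main step is to pin down the effective generator of the Hamiltonian-extended channel. Set $V(\theta)=\e{-\ii G_{\mathrm{ext}}(\theta)}$, so $\mc{U}_{G_{\mathrm{ext}}}$ is a unitary channel; the standard formula for a unitarily encoded parameter gives, with the Hermitian operator $h(\theta):=\ii\,\dot V(\theta)V(\theta)^{\dagger}$, that the QFI of $V(\theta)\ket{\psi}$ equals four times the variance of $h(\theta)$ in that state, so maximising over inputs yields $\cqfi{\mc{U}_{G_{\mathrm{ext}}}}{\theta}=\big(\mu_{\max}(\theta)-\mu_{\min}(\theta)\big)^{2}$, with $\mu_{\max},\mu_{\min}$ the extreme points of $\mathrm{spec}\big(h(\theta)\big)$. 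Because $G_{\mathrm{ext}}(\theta)=\theta\,G\otimes\id+H_{\mathrm{int}}$ has $\theta$-independent derivative $G\otimes\id$, the Duhamel formula for the derivative of an exponential gives
\begin{equation}\label{eq:heff_plan}
h(\theta)=\int_{0}^{1}\e{-\ii s\,G_{\mathrm{ext}}(\theta)}\,\big(G\otimes\id\big)\,\e{\ii s\,G_{\mathrm{ext}}(\theta)}\;\diff s .
\end{equation}
Now comes the spectral sandwich: for each $s$ the integrand in \eqref{eq:heff_plan} is unitarily conjugate to $G\otimes\id$, hence has spectrum in $[g_{\min},g_{\max}]$, and since $\sup\mathrm{spec}(A)=\sup_{\norm{u}=1}\pscal{u}{Au}$ for Hermitian $A$, the convex (integral) average $h(\theta)$ has spectrum in the same interval; thus $\mu_{\min}(\theta)\ge g_{\min}$ and $\mu_{\max}(\theta)\le g_{\max}$, whence $\cqfi{\mc{U}_{G_{\mathrm{ext}}}}{\theta}\le(g_{\max}-g_{\min})^{2}=\cqfi{\mc{U}_{\theta G}}{\theta}$.

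To route the argument instead through Theorem~\ref{thm:Fujiwara_Imai} — the paper's declared technical tool — one applies it to $\mc{U}_{G_{\mathrm{ext}}}\otimes\idc$: this channel has Kraus rank one, so $q=1$ is admissible, its $1$-Kraus decompositions are $\ens{\e{\ii\vp(\theta)}V(\theta)}$, and a short computation gives $\dot A_{1}^{\dagger}\dot A_{1}=\big(V^{\dagger}(\theta)h(\theta)V(\theta)+\dot\vp(\theta)\,\id\big)^{2}$; minimising $\pnorm{\cdot}{\infty}$ over $\dot\vp$ and using \eqref{eq:prop_infty_norm_2} reproduces $\cqfi{\mc{U}_{G_{\mathrm{ext}}}\otimes\idc}{\theta}=(\mu_{\max}-\mu_{\min})^{2}$, while $\cqfi{\mc{U}_{G_{\mathrm{ext}}}}{\theta}=\cqfi{\mc{U}_{G_{\mathrm{ext}}}\otimes\idc}{\theta}$ for unitary channels, after which the same sandwich closes the bound.

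I expect the only genuinely non-routine point to be \eqref{eq:heff_plan}: one must recognise that, because $G\otimes\id$ and $H_{\mathrm{int}}$ need not commute, the effective generator of the extended phase shift is not $G\otimes\id$ itself but its Heisenberg rotation averaged over the propagation interval — a continuous mixture of operators sharing one and the same spectrum — and that such mixing can only narrow, never widen, the gap between the largest and smallest eigenvalues, which is precisely why the extension is metrologically useless. The remaining care concerns an infinite-dimensional ancilla: there $G\otimes\id$ is still bounded and conjugation by $\e{\pm\ii s G_{\mathrm{ext}}}$ preserves both boundedness and the spectral interval, so the direct argument survives unchanged, whereas the Fujiwara--Imai route would first need a finite-dimensional truncation.
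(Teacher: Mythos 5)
Your proof is correct, and it takes a genuinely different route from the paper's. The paper never identifies the effective generator $h(\theta)=\ii\,\dot U_{G_{\mathrm{ext}}}U_{G_{\mathrm{ext}}}^\dagger$; instead it (i) upper-bounds $\cqfi{\mc{U}_{G_{\mathrm{ext}}}}{\theta}$ by the extended-channel QFI, (ii) evaluates the latter via Theorem \ref{thm:Fujiwara_Imai} with a one-element Kraus ensemble, (iii) proves $\pnorm{\dot U_{G_{\mathrm{ext}}}}{\infty}\le\pnorm{G}{\infty}$ by Trotterizing $\e{-\ii(\theta G\otimes\id+H_{\mathrm{int}})}$ and interchanging the $N\to\infty$ limit with the difference quotient, and (iv) must first recenter $G$ by a $\theta$-linear shift, because $\pnorm{G}{\infty}=\max(\absv{g_1},\absv{g_d})$ coincides with the spectral half-width $(g_{\max}-g_{\min})/2$ only for centered generators. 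Your Duhamel representation $h(\theta)=\int_0^1\e{-\ii sG_{\mathrm{ext}}}(G\otimes\id)\e{\ii sG_{\mathrm{ext}}}\,\diff s$ compresses steps (iii) and (iv) into one: it retains the whole spectral interval $[g_{\min},g_{\max}]$ rather than just a norm, so the numerical-range sandwich bounds the width $\mu_{\max}-\mu_{\min}$ directly and no centering is needed; it also makes the channel-extension detour and the Fujiwara--Imai theorem dispensable for the main inequality, while your second route shows they can be reinstated and then actually recovers a sharpened form of the paper's Lemma \ref{lem:1}, namely $\cqfi{\mc{U}_{G_{\mathrm{ext}}}\otimes\idc}{\theta}\le(g_{\max}-g_{\min})^2$ without prior centering (since $h\otimes\id$ has the same spectrum as $h$). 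Two cosmetic points only: the equality $\cqfi{\mc{U}_{G_{\mathrm{ext}}}}{\theta}=(\mu_{\max}-\mu_{\min})^2$ should be read as a supremum when the ancilla is infinite-dimensional (only the ``$\le$'' direction is used, and it follows from Popoviciu applied to $h$), and the sign in $\bigl(V^\dagger hV+\dot\vp\,\id\bigr)^2$ is convention-dependent but immaterial because $\dot\vp$ ranges over all of $\mathbf{R}$.
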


\subsection{Channel QFI of a phase shift}
\begin{figure}
\centering\includegraphics[scale=0.8]{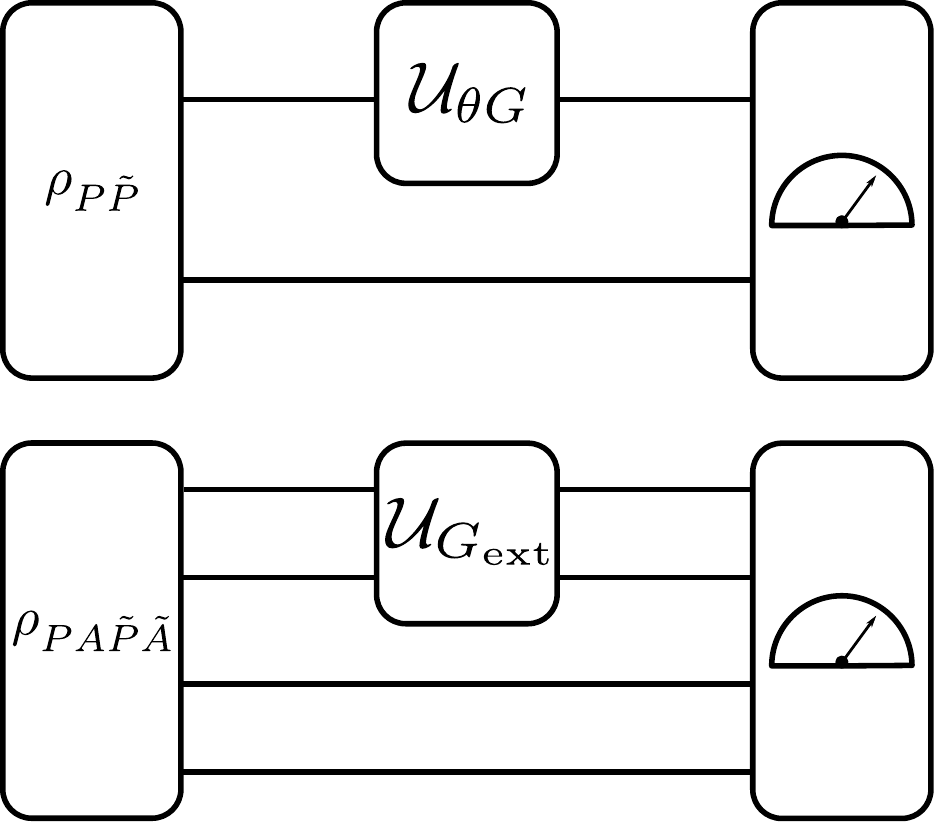}
\caption{Technical channel extension of a Hamiltonian-extended
  phase shift channel. We use channel extension for phase shift and
  Hamiltonian-extended 
  phase shift in order to calculate the channel QFI with the help of
  the theorem \ref{thm:Fujiwara_Imai} from Fujiwara and Imai
  \cite{fujiwara_fibre_bundle_2008}.
The   subscripts $P$ and $A$ 
  correspond to the ''physical'' probe and ancilla, while $\tilde{P}$
  and $\tilde{A}$ refer to ancillary systems used for the technical
  channel extension. } \label{fig:technical_extension} 
\end{figure}

To prove this theorem we will make technical use of channel extensions of both
the original phase shift channel and the Hamiltonian-extended phase
shift channel. The situation is
represented in figure \ref{fig:technical_extension}.  
We first show the following lemma:
\begin{lemma}[Invariance of the channel QFI of phase shift channels
  under channel extension] 
Consider a phase shift channel $\mc{U}_{\theta G }$ with generator $G$. The channel QFI of the extended channel $\mc{U}_{\theta G }\otimes \idc$ is equal to the original channel QFI:
\begin{equation}\label{eq:phase_shift_equality_ext_original}
\cqfi{\mc{U}_{\theta G}\otimes \idc}{\theta}=\cqfi{\mc{U}_{\theta G}}{\theta}\;.
\end{equation}
This shows that phase shift channels do not benefit in terms of
channel QFI from channel extensions. 
\end{lemma}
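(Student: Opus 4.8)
The plan is to prove the two inequalities separately. The bound $\cqfi{\mc{U}_{\theta G}\otimes\idc}{\theta}\ge\cqfi{\mc{U}_{\theta G}}{\theta}$ is immediate: feeding the extended channel a product input $\rho\otimes\sigma$ with a fixed, $\theta$-independent ancilla state $\sigma$ gives $(\mc{U}_{\theta G}\otimes\idc)(\rho\otimes\sigma)=\mc{U}_{\theta G}(\rho)\otimes\sigma$, so additivity of the QFI, \eqref{eq:additivity_qfi}, together with $\qfi{\sigma}{\theta}=0$, yields $\qfi{(\mc{U}_{\theta G}\otimes\idc)(\rho\otimes\sigma)}{\theta}=\qfi{\mc{U}_{\theta G}(\rho)}{\theta}$; taking the supremum over $\rho$ settles this direction.

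For the reverse inequality I would evaluate the left-hand side with Theorem~\ref{thm:Fujiwara_Imai}. A unitary channel has Kraus rank one, so the choice $q=1$ is admissible. Taking $A_1(\theta)=\e{-\ii\theta G}$ as the reference decomposition, the Kraus-gauge freedom $u_{11}(\theta)=\e{\ii c\theta}$ with $c$ real produces the $1$-Kraus decomposition $B_1(\theta)=\e{-\ii\theta(G-c\,\id)}$; since $G$ commutes with $\e{-\ii\theta G}$, this realizes, at any fixed $\theta$, every possible value of $\dot B_1^\dagger(\theta)\dot B_1(\theta)=(G-c\,\id)^2$. Hence Theorem~\ref{thm:Fujiwara_Imai} and property \eqref{eq:prop_infty_norm_2} (note that $G-c\,\id$ is Hermitian) give
\[
\cqfi{\mc{U}_{\theta G}\otimes\idc}{\theta}=4\min_{c\in\mathbb{R}}\pnorm{(G-c\,\id)^2}{\infty}=4\min_{c\in\mathbb{R}}\pnorm{G-c\,\id}{\infty}^{2}.
\]
Writing $g_{+}$ and $g_{-}$ for the supremum and the infimum of the (bounded, hence finite) spectrum of $G$, the spectral theorem gives $\pnorm{G-c\,\id}{\infty}=\max(|g_{+}-c|,|g_{-}-c|)$, which is smallest at $c=(g_{+}+g_{-})/2$ with value $(g_{+}-g_{-})/2$; thus $\cqfi{\mc{U}_{\theta G}\otimes\idc}{\theta}=(g_{+}-g_{-})^{2}$.

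It remains to check that the unextended channel QFI has the same value — a standard computation. By convexity of the QFI, \eqref{eq:convexity_QFI}, one may optimize over pure inputs $\ket\psi$, for which $\e{-\ii\theta G}\ket\psi$ is a pure-state family with QFI $4\,\Vstate{G}{\psi}=4(\bra\psi G^2\ket\psi-\bra\psi G\ket\psi^{2})$. From $\Vstate{G}{\psi}=\min_{c\in\mathbb{R}}\bra\psi(G-c\,\id)^2\ket\psi\le\min_{c\in\mathbb{R}}\pnorm{G-c\,\id}{\infty}^{2}=(g_{+}-g_{-})^{2}/4$, saturated by the balanced superposition of eigenvectors associated with the extreme spectral values, one gets $\cqfi{\mc{U}_{\theta G}}{\theta}=(g_{+}-g_{-})^{2}$, which matches the value above and proves \eqref{eq:phase_shift_equality_ext_original}. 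Conceptually this is just the statement that the channel QFI of a phase shift depends only on the spread of its generator's spectrum and that $G\otimes\id$ has the same spectrum as $G$. The only points that need some care — the closest thing to an obstacle — are the bookkeeping of the Kraus-gauge freedom used above and, should the extreme spectral values $g_{\pm}$ not be attained (a genuinely infinite-dimensional probe or ancilla with merely bounded spectrum), a routine limiting argument in which the eigenvectors are replaced by normalized vectors supported in the spectral subspaces $[g_{-},g_{-}+\ve)$ and $(g_{+}-\ve,g_{+}]$ with $\ve\to0$; this is harmless since only the spread of the spectrum enters.
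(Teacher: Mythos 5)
Your proof is correct, but it reaches the conclusion by a different route than the paper. For the nontrivial half (upper-bounding the extended channel QFI) you invoke the Fujiwara--Imai theorem with $q=1$, exploit the $1$-Kraus gauge freedom $u_{11}(\theta)=\e{\ii\phi(\theta)}$ to reduce the minimization to $4\min_{c\in\mathbb{R}}\pnorm{G-c\,\id}{\infty}^2=(g_+-g_-)^2$, and then match this against the unextended value $4\max_\psi \Vstate{G}{\ddens{\psi}}$. The paper instead gives a more elementary argument: it notes that for a unitary channel the (extended or unextended) output on a pure input is pure with QFI equal to $4$ times the variance of the generator, so both channel QFIs are $4\max\Vstate{\cdot}{\cdot}$ of $G$ and $G\otimes\id$ respectively; since tensoring with the identity changes only multiplicities and not the spectrum, Popoviciu's inequality bounds both variances by $|g_1-g_d|^2/4$, and both bounds are saturated by the balanced superposition of extremal eigenvectors (tensored with anything on the ancilla). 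In effect you have re-derived, in the special case of phase shifts, the content of the paper's Propositions \ref{prop:1} and \ref{prop:3} and folded them into the proof of this lemma; what this buys is an explicit closed form $\cqfi{\mc{U}_{\theta G}\otimes\idc}{\theta}=(g_+-g_-)^2$ obtained directly from the channel-extension machinery, plus a treatment of the case where the extremal spectral values are not attained, whereas the paper's version is shorter, avoids Theorem \ref{thm:Fujiwara_Imai} entirely at this stage, and works purely at the level of variances. Your first ($\geq$) direction via additivity is correct but redundant once both sides are computed exactly.
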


\begin{proof}
The lemma follows by comparing the channel QFI of the
extended phase shift channel and of the original phase shift
channel. 
In both
cases the QFI is maximized by a pure state, and we also know that
the QFI for a phase shift channel for a pure state is equal to four
times the variance of the generator. Thus we have 
\begin{align}
\cqfi{\mc{U}_{\theta G}}{\theta}&=4\max_{\ket{\psi}\in \mc{H}}\Vstate{G}{\ddens{\psi}}\;,\label{lem1.1}\\
\cqfi{\mc{U}_{\theta G}\otimes \idc}{\theta}&=4\max_{\ket{\varphi}\in \mc{H}\otimes  \mc{H}}\Vstate{G\otimes \id}{\ddens{\varphi}} \;.\label{lem1.2}
\end{align}
To see that these two quantities are equal, it is enough to consider
the smallest and largest eigenvalues of $G$, 
$g_1$ and  $g_d$, respectively. The corresponding
eigenvectors are noted $\ket{\psi_1}$ and $\ket{\psi_d}$. Popoviciu's
inequality \cite{Popoviciu35} 
states  that the variance of a random variable $X$ with lower and upper bound $a=\inf(X)$ and $b=\sup(X)$ respectively, is upper bounded by $\absv{b-a}^2/4$. Since extending the
Hamiltonian by the identity 
  does not change the value of the eigenvalues but just their
  multiplicity, this implies that
both variances are upper bounded by $\absv{g_1-g_d}^2/4$, 
\begin{align*}
\max_{\ket{\psi}\in \mc{H}}\Vstate{G}{\ddens{\psi}}&\leq \absv{g_1-g_d}^2/4\;,\\
\max_{\ket{\varphi}\in \mc{H}\otimes  \mc{H}}\Vstate{G\otimes \id}{\ddens{\varphi}}&\leq \absv{g_1-g_d}^2/4 \;,
\end{align*}
 The proof is completed by noticing that both bounds are saturated,
 respectively, by the state
 $\ket{\psi_{\mathrm{opt}}}=(\ket{\psi_1}+\ket{\psi_d})/\sqrt{2}$ and
 $\ket{\varphi_{\mathrm{opt}}}=\ket{\psi_{\mathrm{opt}}}\otimes
 \ket{\tilde{\varphi}}$ with $\ket{\tilde{\varphi}}$ an arbitrary
 state.    
\end{proof}

\subsection{Channel QFI of a general unitary
  channel}\label{sec:partIII_chan_QFI_arb_unit_channel} 

We now go back to the general case with the Hamiltonian
$H(\theta)$. 
We have the following
proposition: 
\begin{proposition}[Channel QFI of a general unitary channel]\label{prop:1}
Consider a general unitary channel $\mc{U}_{H }$ with Hamiltonian
$H(\theta)$. The channel QFI of the extended channel $\mc{U}_{H
}\otimes \idc$ is written 
\begin{equation}\label{eq:cqfi_arb_unitary}
\cqfi{\mc{U}_{H} \otimes \idc}{\theta}=4 \min_{x \in \mathbf{R}} \pnorm{ \dot{U}_{H}-\ii x  U_{H}}{\infty}^2 \;.
\end{equation}
\end{proposition}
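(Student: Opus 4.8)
The plan is to apply the theorem of Fujiwara and Imai (Theorem~\ref{thm:Fujiwara_Imai}) directly to the unitary channel $\mc{U}_H$. A unitary channel has Kraus rank one --- its Choi matrix is a rank-one projector --- so we may take $q=1$ in that theorem. A reference $1$-Kraus decomposition of $\mc{U}_H$ is simply $\ens{U_H(\theta)}$ with $U_H(\theta)=\e{-\ii H(\theta)}$, and the whole $1$-Kraus ensemble $\mathscr{A}_1$ consists of the decompositions $\ens{B(\theta)=\e{\ii\vp(\theta)}U_H(\theta)}$, where the $1\times 1$ unitary $\e{\ii\vp(\theta)}$ runs over all smooth $U(1)$-valued functions, i.e.\ $\vp$ is an arbitrary smooth real function.

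I would then differentiate: $\dot B(\theta)=\e{\ii\vp(\theta)}\bigl(\dot U_H(\theta)+\ii\dot\vp(\theta)\,U_H(\theta)\bigr)$, so the global phase drops out of $\dot B^\dagger\dot B$, and property~\eqref{eq:prop_infty_norm_2} gives
\begin{equation*}
\pnorm{\dot B^\dagger(\theta)\dot B(\theta)}{\infty}=\pnorm{\dot U_H(\theta)+\ii\dot\vp(\theta)\,U_H(\theta)}{\infty}^{2}\;.
\end{equation*}
At the fixed $\theta$ at which the channel QFI is evaluated, $\dot\vp(\theta)$ is an unconstrained real number (a smooth function may have any prescribed derivative at a point), so minimising over $\mathscr{A}_1$ reduces to minimising over the single real parameter $x=-\dot\vp(\theta)\in\mathbf{R}$. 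Substituting into Theorem~\ref{thm:Fujiwara_Imai} then yields \eqref{eq:cqfi_arb_unitary} at once.

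The step needing the most care is making sure the optimum over $x$ is genuinely a minimum rather than only an infimum, as written in \eqref{eq:cqfi_arb_unitary}: since $\pnorm{U_H}{\infty}=1$, the reverse triangle inequality gives $\pnorm{\dot U_H-\ii x\,U_H}{\infty}\ge\absv{x}-\pnorm{\dot U_H}{\infty}$, so the objective diverges as $\absv{x}\to\infty$ and, being continuous in $x$, attains its infimum on a compact interval. A secondary point to verify is that Theorem~\ref{thm:Fujiwara_Imai} is being used in the boundary case $q=\mathrm{rank}(\mc{U}_H)=1$; if one prefers to stay strictly inside its hypotheses one can instead run the argument with $q=2$, adjoining a vanishing second Kraus operator, and check that this does not change the optimum. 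I do not expect either point to cause real trouble.
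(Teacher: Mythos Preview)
Your proof is correct and follows essentially the same route as the paper: apply Theorem~\ref{thm:Fujiwara_Imai} with $q=1$, parametrize the $1$-Kraus ensemble by a $U(1)$ phase, differentiate, drop the global phase via~\eqref{eq:prop_infty_norm_2}, and identify the free real parameter with the derivative of the phase at the point $\theta$. Your additional remarks---that the infimum over $x$ is actually attained, and the aside about working at $q=2$ if one is nervous about the boundary case---go slightly beyond what the paper spells out, but they are sound and do not change the argument.
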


\begin{proof}
The proof is a direct application of theorem
  \ref{thm:Fujiwara_Imai} by Fujiwara and 
Imai. We start by taking a reference Kraus operator (we work with
$q=1$) $U_H$. 
The 1-Kraus ensemble $\mathscr{A}_1$ is generated by the
reference Kraus operator as $\mathscr{A}_1=\ens{A_1(\theta)=\e{-\ii
    x(\theta)} U_H}$. 
The derivative of the elements of the 1-Kraus
ensemble gives $\dot{A}_1(\theta)=\e{-\ii x(\theta)}( \dot{U}_H-\ii
\dot{x}(\theta)  U_H   )$. Using  property
\eqref{eq:prop_infty_norm_2} of the infinity norm we obtain the
desired result with $x\equiv \dot{x}(\theta)$. \footnote{ The function
  $x(\theta)$ being arbitrary, its derivative can take any value and
  thus the minimization is carried over all $\mathbf{R}$.  
} \end{proof}

\subsection{Linear shift and centered
  Hamiltonians}\label{sec:partIII_ext_ham_lin_shift_ctrd_Ham} 

\begin{proposition}[Linear shift of generators]\label{prop:2}
Consider a linear shift proportional to $\theta$ for a general unitary evolution $\mc{U}_H$ generated by $H(\theta)$, 
\begin{equation}
H_\alpha(\theta) = H(\theta) + \theta \alpha \id\;,
\end{equation}
and define the channel $\mc{U}_{H_\alpha}$ by
\begin{equation}
\mc{U}_{H_\alpha}(\rho)=U_{H_\alpha} \rho {U_{H_\alpha}}^\dagger\;,
\end{equation}
with $U_{H_\alpha}=\e{-\ii H_\alpha(\theta) }$.
Then the channel QFI is invariant under such linear shifts
\begin{equation}\label{eq:shifting_phase shift}
 \cqfi{\mc{U}_{H_\alpha}}{\theta} =\cqfi{\mc{U}_{H}}{\theta}\;.
 \end{equation} 
\end{proposition}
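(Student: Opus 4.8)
The plan is to reduce the statement to the elementary fact that a $\theta$-proportional multiple of the identity exponentiates to a global phase, which drops out of the channel. First I would note that $\theta\alpha\id$ is a scalar operator, so it commutes with $H(\theta)$ for every $\theta$ (with $\alpha\in\mathbf{R}$ so that $H_\alpha$ stays Hermitian), and the exponential of the sum therefore factorises,
\begin{equation}
U_{H_\alpha}=\e{-\ii\,(H(\theta)+\theta\alpha\id)}=\e{-\ii\theta\alpha}\,U_{H}\;.
\end{equation}
Second, I would insert this into the definition of $\mc{U}_{H_\alpha}$: since $\e{-\ii\theta\alpha}$ is a complex number of modulus one, it cancels against the $\e{+\ii\theta\alpha}$ produced by $U_{H_\alpha}^\dagger$, so that $\mc{U}_{H_\alpha}(\rho)=U_{H_\alpha}\rho\,U_{H_\alpha}^\dagger=U_H\rho\,U_H^\dagger=\mc{U}_H(\rho)$ for every density matrix $\rho$ and every $\theta$. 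Hence $\mc{U}_{H_\alpha}$ and $\mc{U}_H$ are literally the same one-parameter family of CPTP maps, and since the channel QFI depends only on the channel family, $\cqfi{\mc{U}_{H_\alpha}}{\theta}=\cqfi{\mc{U}_{H}}{\theta}$ follows at once; the same observation applied to $\mc{U}_{H_\alpha}\otimes\idc$ shows the invariance persists under channel extension, which is the form actually needed when this proposition is used in the proof of the main Theorem~\ref{thm:main}.

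As a cross-check, the invariance is also visible directly inside the formula of Proposition~\ref{prop:1}: from $\dot{U}_{H_\alpha}=\e{-\ii\theta\alpha}(\dot{U}_H-\ii\alpha U_H)$ and the invariance of $\pnorm{\bullet}{\infty}$ under multiplication by a unit-modulus scalar one gets $\pnorm{\dot{U}_{H_\alpha}-\ii x U_{H_\alpha}}{\infty}=\pnorm{\dot{U}_H-\ii(x+\alpha)U_H}{\infty}$, so the shift $x\mapsto x+\alpha$ inside the minimisation over $\mathbf{R}$ returns exactly $\cqfi{\mc{U}_{H}\otimes\idc}{\theta}$. This is the version of the argument that mirrors how the proposition is used downstream: it allows one to replace a generator $G$ (or $G_{\mathrm{ext}}$) by a ``centered'' one, whose spectrum has been translated to a convenient location, without changing the channel QFI.

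I do not expect a genuine obstacle: the whole content is the single algebraic identity $\e{-\ii\theta\alpha\id}=\e{-\ii\theta\alpha}\,\id$. The only points that deserve a word of care are that the added term must really be proportional to the identity, so that it commutes with the possibly $\theta$-dependent $H(\theta)$ and the factorisation of the exponential is exact, and that, although the scalar $\e{-\ii\theta\alpha}$ itself depends on $\theta$, no derivative of it ever enters, because the argument never differentiates the channel but only uses that $\mc{U}_{H_\alpha}$ and $\mc{U}_H$ coincide as maps.
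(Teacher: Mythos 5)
Your proof is correct and follows essentially the same route as the paper: factor out the global phase $\e{-\ii\theta\alpha}$, observe that it cancels in the conjugation so the two channel families coincide as maps, and note that the same holds for the extended channels. The additional cross-check via the minimisation formula of Proposition~\ref{prop:1} is a nice consistency check but not needed.
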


\begin{proof}
 We can expand $U_{H_\alpha}$ as
\begin{equation}\label{eq:shifted_unit_op}
 U_{H_\alpha}  =\e{-\ii ( H(\theta) + \theta \alpha \id)}=  \e{-\ii H(\theta)}  \e{-\ii  \theta \alpha \id} = \e{-\ii \theta \alpha} \e{-\ii H(\theta)} = \e{-\ii \theta \alpha} U_H\;.
\end{equation}
When applying this channel to a state $\rho$ we get
\begin{equation}
 U_{H_\alpha} \rho  U_{H_\alpha}^\dagger= \e{-\ii \theta \alpha} U_H \rho  (\e{-\ii \theta \alpha} U_H)^\dagger=U_H \rho U_H^\dagger\;.
\end{equation}
Both channels produce the same state, since the shift just adds a
global phase factor. Therefore the channel QFI for both channels are
equal,   
\begin{equation}\label{eq:shifting_unitary_non_ext}
 \cqfi{\mc{U}_{H_\alpha}}{\theta} =\cqfi{\mc{U}_{H}}{\theta}\;.
 \end{equation} 
 In the same fashion we obtain for extended unitary channels
  \begin{equation}\label{eq:shifting_unitary}
 \cqfi{\mc{U}_{H_\alpha}\otimes\idc}{\theta} =\cqfi{\mc{U}_{H} \otimes \idc}{\theta}\;.
 \end{equation}
\end{proof}

 
 We now go back to the case of unitary evolution in the form of phase
 shifts with a 
 Hamiltonian $H(\theta)=\theta G$. We say that a
generator 
is centered and use the notation  $\tilde{G}$ if and only if its
largest and smallest 
 eigenvalues obey $\tilde{g}_1=-\tilde{g}_d$. We 
 then have the following proposition:
 \begin{proposition}[Centered phase shift]\label{prop:3}
The channel QFI of  the extended centered phase shift channel is equal to
 \begin{equation}
 \cqfi{\mc{U}_{\theta \tilde{G}}\otimes\idc}{\theta} =(
 \tilde{g}_1 -\tilde{g}_d )^2 =4 \tilde{g}_1^2 = 4
 \pnorm{\tilde{G}}{\infty}^2\;. \label{eq:cenphsh}
 \end{equation}
 \end{proposition}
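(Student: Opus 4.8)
The plan is to reduce the statement to results already proved above, namely the preceding Lemma on invariance of the phase-shift channel QFI under channel extension, Popoviciu's inequality, and the elementary characterization of the operator norm of a Hermitian operator. First I would apply that Lemma with $G\to\tilde G$, giving $\cqfi{\mc{U}_{\theta\tilde G}\otimes\idc}{\theta}=\cqfi{\mc{U}_{\theta\tilde G}}{\theta}=4\max_{\ket{\psi}\in\mc{H}}\Vstate{\tilde G}{\ddens{\psi}}$. By Popoviciu's inequality the variance of $\tilde G$ in any state is bounded by $\absv{\tilde g_1-\tilde g_d}^2/4$, and — exactly as recalled in the proof of the Lemma — this bound is attained by $(\ket{\psi_1}+\ket{\psi_d})/\sqrt2$. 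Hence $\cqfi{\mc{U}_{\theta\tilde G}\otimes\idc}{\theta}=\absv{\tilde g_1-\tilde g_d}^2$.

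Next I would invoke the centering hypothesis $\tilde g_1=-\tilde g_d$. Since $\tilde g_1\geq\tilde g_d$ by definition of largest and smallest eigenvalue, centering forces $\tilde g_1\geq 0\geq\tilde g_d$, so $\absv{\tilde g_1-\tilde g_d}=\tilde g_1-\tilde g_d=2\tilde g_1$, which already yields the first two equalities in \eqref{eq:cenphsh}. For the last one I would use that the operator norm of the Hermitian operator $\tilde G$ equals $\max(\absv{\tilde g_1},\absv{\tilde g_d})$, which is $\tilde g_1$ in the centered case; squaring gives $4\tilde g_1^2=4\pnorm{\tilde G}{\infty}^2$ and closes the chain of equalities.

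As an independent check I would also derive \eqref{eq:cenphsh} directly from Proposition~\ref{prop:1}: with $U_H=\e{-\ii\theta\tilde G}$ one has $\dot{U}_H=-\ii\tilde G U_H$, so $\dot{U}_H-\ii x U_H=-\ii(\tilde G+x\id)U_H$ and, $U_H$ being unitary, $\pnorm{\dot{U}_H-\ii x U_H}{\infty}=\pnorm{\tilde G+x\id}{\infty}$. Minimizing $\pnorm{\tilde G+x\id}{\infty}$ over $x\in\mathbf{R}$ amounts to centering the spectrum; the optimum is $x=-(\tilde g_1+\tilde g_d)/2$, which vanishes here, with value $\pnorm{\tilde G}{\infty}=\tilde g_1$, again reproducing \eqref{eq:cenphsh}.

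I do not anticipate a genuine obstacle: the content is entirely a combination of earlier statements with Popoviciu's inequality and the spectral characterization of $\pnorm{\cdot}{\infty}$. The only point that needs a line of care is the sign bookkeeping — checking that the centering condition makes $\tilde g_1$ (rather than $\absv{\tilde g_d}$ or some combination) the quantity appearing in \eqref{eq:cenphsh} — which follows at once from $\tilde g_1\geq\tilde g_d=-\tilde g_1$, i.e.\ $\tilde g_1\geq 0$.
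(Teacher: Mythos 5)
Your proof is correct and follows essentially the same route as the paper: the paper's (very terse) proof implicitly relies on the value $(\tilde g_1-\tilde g_d)^2$ already established in the preceding Lemma via Popoviciu's inequality and its saturation, combined with the spectral characterization of $\pnorm{\cdot}{\infty}$ for Hermitian operators under the centering condition — exactly the steps you spell out. The only cosmetic discrepancy is that the paper's convention (from the Lemma) has $g_1$ as the \emph{smallest} eigenvalue, so your claim $\tilde g_1\geq 0$ has the opposite sign; this is immaterial since every quantity in \eqref{eq:cenphsh} is a square or an absolute value.
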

\begin{proof} The proof is direct when making use of the fact that the
  infinity norm of a Hermitian operator is given by the largest absolute value of its
  eigenvalues.  
Since the Hamiltonian is centered, both its extremal eigenvalues have
the same absolute value $\absv{\tilde{g}_1}=\absv{\tilde{g}_d}$ which
gives the desired result. 
\end{proof}

   With eqs.(\ref{eq:cenphsh},\ref{eq:phase_shift_equality_ext_original}) we obtain for
   centered phase shift channels 
  \begin{equation}\label{eq:cqfi_centered_phase_shift}
  \cqfi{\mc{U}_{\theta \tilde{G}}}{\theta}  = 4 \pnorm{\tilde{G}}{\infty}^2\;.
  \end{equation}

\subsection{Extensions of phase shift Hamiltonians}\label{sec:partIII_ext_ham_ext_ph_shift_Ham}

The extended phase shift \eqref{eq:def_ext_ph_shift} corresponds to a
general unitary channel  with Hamiltonian  $G_{\mathrm{ext}}(\theta)$
and thus the results of section
\ref{sec:partIII_chan_QFI_arb_unit_channel} hold. In particular  we
have  from eq.\eqref{eq:cqfi_arb_unitary} 
\begin{equation}\label{eq:min_U_G_ext}
\cqfi{\mc{U}_{G_{\mathrm{ext}}}\otimes \id}{\theta}=4 \min_{x \in \mathbf{R}} \pnorm{ \dot{U}_{G_{\mathrm{ext}}}-\ii x  U_{G_{\mathrm{ext}}}}{\infty}^2\;,
\end{equation}
with $U_{G_{\mathrm{ext}}}=\e{-\ii G_{\mathrm{ext}}(\theta)}=\e{-\ii(\theta G\otimes \id + H_{\mathrm{int}})}$.
In the following we will find an upper bound to the right hand side of equation \eqref{eq:min_U_G_ext}.

\begin{lemma}[Upper bound for $\cqfi{\mc{U}_{G_{\mathrm{ext}}}\otimes \idc}{\theta}$]\label{lem:1}
The channel QFI $ \cqfi{\mc{U}_{G_{\mathrm{ext}}}\otimes
  \idc}{\theta}$ is upper bounded by  four times the norm of the
original generator of the phase shift: 
\begin{equation}\label{eq:cqfi_ext_G}
\cqfi{\mc{U}_{G_{\mathrm{ext}}}\otimes \idc}{\theta} \leq 4 \pnorm{ G}{\infty}^2 \;.
\end{equation}
\end{lemma}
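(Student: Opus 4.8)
I want to bound the right-hand side of eq.\eqref{eq:min_U_G_ext}, which involves the free minimization over $x\in\mathbf{R}$. Since it is a minimum, it suffices to exhibit \emph{one} convenient choice of $x$ and one convenient manipulation of $\dot U_{G_{\mathrm{ext}}}$ that produces the bound $4\pnorm{G}{\infty}^2$. The natural first move is to compute $\dot U_{G_{\mathrm{ext}}}\equiv \diff U_{G_{\mathrm{ext}}}/\diff\theta$. Because $G_{\mathrm{ext}}(\theta)=\theta G\otimes\id+H_{\mathrm{int}}$ and the two summands do not commute in general, one cannot simply pull out $-\ii(G\otimes\id)$; instead one uses the standard integral (Duhamel) formula for the derivative of a matrix exponential,
\begin{equation*}
\dot U_{G_{\mathrm{ext}}}=-\ii\int_0^1 U_{G_{\mathrm{ext}}}^{s}\,(G\otimes\id)\,U_{G_{\mathrm{ext}}}^{1-s}\,\diff s\,,
\end{equation*}
where $U_{G_{\mathrm{ext}}}^{s}=\e{-\ii s\,G_{\mathrm{ext}}(\theta)}$ and I have used $\frp{G_{\mathrm{ext}}}{\theta}=G\otimes\id$. (One may equivalently write this as $-\ii U_{G_{\mathrm{ext}}}\int_0^1 \e{\ii s G_{\mathrm{ext}}}(G\otimes\id)\e{-\ii s G_{\mathrm{ext}}}\diff s$, which will be cleaner for the next step.)

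\emph{Key steps, in order.} First, using the second form above, factor out $U_{G_{\mathrm{ext}}}$ on the left and write $\dot U_{G_{\mathrm{ext}}}=-\ii U_{G_{\mathrm{ext}}}\,\bar G$, where $\bar G:=\int_0^1 \e{\ii s G_{\mathrm{ext}}}(G\otimes\id)\e{-\ii s G_{\mathrm{ext}}}\diff s$ is Hermitian. Second, observe that conjugation by a unitary is an isometry for the operator norm, so $\pnorm{\e{\ii s G_{\mathrm{ext}}}(G\otimes\id)\e{-\ii s G_{\mathrm{ext}}}}{\infty}=\pnorm{G\otimes\id}{\infty}=\pnorm{G}{\infty}$ for every $s$; combined with the triangle inequality for the integral this gives $\pnorm{\bar G}{\infty}\le\pnorm{G}{\infty}$. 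Third, plug into eq.\eqref{eq:min_U_G_ext}: the expression to be minimized is
\begin{equation*}
\pnorm{\dot U_{G_{\mathrm{ext}}}-\ii x\,U_{G_{\mathrm{ext}}}}{\infty}^2=\pnorm{-\ii U_{G_{\mathrm{ext}}}(\bar G+x\id)}{\infty}^2=\pnorm{U_{G_{\mathrm{ext}}}(\bar G+x\id)}{\infty}^2=\pnorm{\bar G+x\id}{\infty}^2\,,
\end{equation*}
using unitary invariance of $\pnorm{\bullet}{\infty}$ again. Fourth, bound the minimum over $x$ by the value at $x=0$: $\min_{x}\pnorm{\bar G+x\id}{\infty}^2\le\pnorm{\bar G}{\infty}^2\le\pnorm{G}{\infty}^2$. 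Multiplying by $4$ yields eq.\eqref{eq:cqfi_ext_G}.

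\emph{Where the work really lies.} Everything downstream of the Duhamel formula is just repeated use of two elementary facts — unitary invariance of the operator norm and the triangle inequality — so the only genuine obstacle is justifying the differentiation-under-the-integral / matrix-exponential derivative step cleanly and making sure the noncommutativity of $\theta G\otimes\id$ with $H_{\mathrm{int}}$ is handled correctly (this is precisely the point where a naive argument would wrongly give $\dot U_{G_{\mathrm{ext}}}=-\ii(G\otimes\id)U_{G_{\mathrm{ext}}}$). A secondary, cosmetic point: one should note that the choice $x=0$ need not be optimal — indeed combining with Proposition~\ref{prop:3} one expects the optimal $x$ to re-center $\bar G$ — but since we only need an upper bound, $x=0$ suffices and keeps the argument short.
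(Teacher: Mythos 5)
Your proof is correct and reaches the same key intermediate bound, $\pnorm{\dot U_{G_{\mathrm{ext}}}}{\infty}\le\pnorm{G}{\infty}$, by a genuinely different and cleaner route. The paper represents $U_{G_{\mathrm{ext}}}$ via the Trotter product formula, invokes a theorem on interchanging limits to differentiate the $N$-fold product term by term, and then controls the resulting sum with submultiplicativity and the triangle inequality; your Duhamel representation $\dot U_{G_{\mathrm{ext}}}=-\ii\, U_{G_{\mathrm{ext}}}\bar G$ with $\bar G=\int_0^1\e{\ii s G_{\mathrm{ext}}}(G\otimes\id)\e{-\ii s G_{\mathrm{ext}}}\diff s$ is precisely the $N\to\infty$ limit of that sum, obtained in one step, and in finite dimension it is elementary to justify, so the limit-interchange machinery is avoided entirely. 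Your factorization also buys a little extra structure: the minimand becomes exactly $\pnorm{\bar G+x\id}{\infty}$ by unitary invariance, which makes transparent that the free parameter $x$ serves to re-center the Hermitian operator $\bar G$ (the paper instead discards $x$ immediately via $\pnorm{\dot U_{G_{\mathrm{ext}}}-\ii x U_{G_{\mathrm{ext}}}}{\infty}\le\pnorm{\dot U_{G_{\mathrm{ext}}}}{\infty}+\absv{x}$, arriving at the same choice $x=0$). Both arguments then close identically with $\pnorm{\bar G}{\infty}\le\pnorm{G\otimes\id}{\infty}=\pnorm{G}{\infty}$, so nothing is missing.
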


\begin{proof}
Since the norm is positive, the minimum of its square equals the
square of its minimum, and we obtain 
\begin{equation}
\cqfi{\mc{U}_{G_{\mathrm{ext}}}\otimes \idc}{\theta}=4 (\min_{x \in \mathbf{R}} \pnorm{ \dot{U}_{G_{\mathrm{ext}}}-\ii x  U_{G_{\mathrm{ext}}}}{\infty})^2\;.
\end{equation}
Using the triangle inequality, we have
\begin{equation}
\pnorm{ \dot{U}_{G_{\mathrm{ext}}}-\ii x  U_{G_{\mathrm{ext}}}}{\infty} \leq \pnorm{ \dot{U}_{G_{\mathrm{ext}}}}{\infty} + \absv{x} \pnorm{  U_{G_{\mathrm{ext}}}}{\infty}\;.
\end{equation}
Minimizing over $x$ gives
\begin{equation}
\min_{x \in \mathbf{R}} \left(\pnorm{ \dot{U}_{G_{\mathrm{ext}}}}{\infty} + \absv{x} \pnorm{  U_{G_{\mathrm{ext}}}}{\infty} \right)=\pnorm{ \dot{U}_{G_{\mathrm{ext}}}}{\infty}\;,
\end{equation}
which is reached for $x=0$ since the three terms $\pnorm{ \dot{U}_{G_{\mathrm{ext}}}}{\infty}$, $\absv{x}$ and $\pnorm{  U_{G_{\mathrm{ext}}}}{\infty}$ are all positive.
We are thus left with
\begin{equation}
\cqfi{\mc{U}_{G_{\mathrm{ext}}}\otimes \idc}{\theta} \leq 4  \pnorm{ \dot{U}_{G_{\mathrm{ext}}}}{\infty}^2 \;.
\end{equation}

Now we  try to find an upper bound for $ \pnorm{
  \dot{U}_{G_{\mathrm{ext}}}}{\infty}$. 
To do so we use  the Trotter unitary product formula for a pair of
Hermitian operators $A$ and $B$ and their sum $C=A+B$ which states
that 
\begin{equation}
(\e{-\ii t A/N}\e{-\ii t B/N})^N-\e{-\ii t C} \to 0\;,\;N\to \infty\;,
\end{equation}
with a uniform convergence for $t\in \mathbf{R}$  \cite{trotter_product_1959,ichinose_results_2009}. By setting $A=\theta G\otimes \id$, $B=H_{\mathrm{int}}$ and $t=1$ we get
\begin{equation}
U_{G_{\mathrm{ext}}}=\lim_{N \to \infty} (\e{-\ii \theta G\otimes \id/N}\e{-\ii H_{\mathrm{int}}/N})^N\;.
\end{equation} 
We need to calculate the derivative of this operator. For this we will make use of the following theorem to interchange the orders of the limits.
\begin{theorem}[Interchange of orders of limits \cite{schwartz_analyse_1997}]\label{thm:interchange_limits}
Let $E$ be a topological space, $F$ a metric space, $A$ a subset of $E$ and $f_0, f_1,\cdots , f_n$ a sequence of maps from $A$ in $F$ uniformly converging to $f$. Let also $a$ be an adherent point \footnote{An adherent point $a$ of a subset $A$ of a metric space $E$ is a point in $E$ such that every open set containing $a$ contains also a point of $A$. } of $A$.
\\
If, for each $n$, $f_n(x)$ has a limit when $x\rightarrow a$ through a
sequence of values
in $A$, 
 and if $F$ is complete, then $f(x)$ has a limit when $x\rightarrow a$,
 and furthermore
\begin{equation}
\lim_{\substack{x \to a \\ x\in A}}f(x)=\lim_{n \to \infty} (\lim_{\substack{x \to a \\ x\in A}}f_n(x))\;.
\end{equation}
\end{theorem}
In order to use this theorem we write the derivative of the operator as
\begin{equation}
\dot{U}_{G_{\mathrm{ext}}}=\frd{}{\theta}U_{G_{\mathrm{ext}}}=\lim_{\ve \to 0}\frac{U_{G_{\mathrm{ext}}}\vert_{\theta+\ve}-U_{G_{\mathrm{ext}}}\vert_{\theta}}{\ve}\;.
\end{equation}
Using theorem \ref{thm:interchange_limits}, we have
\begin{align}
 \dot{U}_{G_{\mathrm{ext}}}
 &=\lim_{\ve\to 0} \frac{ \lim_{N \to \infty} (\e{-\ii (\theta+\ve) G\otimes \id/N}\e{-\ii H_{\mathrm{int}}/N})^N- \lim_{N \to \infty} (\e{-\ii \theta G\otimes \id/N}\e{-\ii H_{\mathrm{int}}/N})^N }{\ve}\\
 &=\lim_{\ve\to 0} \lim_{N \to \infty}  \frac{ (\e{-\ii (\theta+\ve) G\otimes \id/N}\e{-\ii H_{\mathrm{int}}/N})^N- (\e{-\ii \theta G\otimes \id/N}\e{-\ii H_{\mathrm{int}}/N})^N }{\ve}\\
 &= \lim_{N \to \infty} \lim_{\ve\to 0 }\frac{ (\e{-\ii (\theta+\ve) G\otimes \id/N}\e{-\ii H_{\mathrm{int}}/N})^N- (\e{-\ii \theta G\otimes \id/N}\e{-\ii H_{\mathrm{int}}/N})^N }{\ve}\\
 &=\lim_{N \to \infty} \frd{}{\theta}(\e{-\ii \theta G\otimes \id/N}\e{-\ii H_{\mathrm{int}}/N})^N\;.
\end{align}
Evaluating the derivative, we obtain
\begin{equation}
\dot{U}_{G_{\mathrm{ext}}}=\lim_{N \to \infty}\sum_{i=1}^{N} (\e{-\ii \theta G\otimes \id/N}\e{-\ii H_{\mathrm{int}}/N})^{i-1}
\times(-\ii  G\otimes \id/N)(\e{-\ii \theta G\otimes \id/N}\e{-\ii H_{\mathrm{int}}/N})^{N-(i-1)}\;.
\end{equation}
Using the submultiplicativity property
\eqref{eq:prop_infty_norm_1} of the infinity norm among with the
triangle inequality, we have for an arbitrary set of operator
$\ens{A_{i,j}}$ 
\begin{equation}\label{eq:propertyquivabien}
\pnorm{\sum_i \prod_j A_{i,j}}{\infty}\leq \sum_i \prod_j \pnorm{A_{i,j}}{\infty}\;.
\end{equation}
Then, using \eqref{eq:propertyquivabien} and the fact that limit and
norm commute, 
\begin{equation}
 \pnorm{\lim_{N\to \infty} A_N}{\infty}=\lim_{N\to \infty} \pnorm{A_N}{\infty}\;,
\end{equation}
we obtain
\begin{multline}
\pnorm{\dot{U}_{G_{\mathrm{ext}}}}{\infty}\leq \lim_{N \to \infty}\sum_{i=1}^N (\pnorm{\e{-\ii \theta G\otimes \id/N}}{\infty}\pnorm{\e{-\ii H_{\mathrm{int}}/N}}{\infty})^{i-1}
\pnorm{-\ii  G\otimes \id/N}{\infty}\\\times(\pnorm{\e{-\ii \theta G\otimes \id/N}}{\infty}\pnorm{\e{-\ii H_{\mathrm{int}}/N}}{\infty})^{N-(i-1)}\;.
\end{multline}
Since the unitary operators have norm one, the result simplifies to
\begin{equation}
\pnorm{\dot{U}_{G_{\mathrm{ext}}}}{\infty}\leq \lim_{N \to \infty}\sum_{i=1}^N\frac{1}{N} \pnorm{ G\otimes \id}{\infty}=\pnorm{ G\otimes \id}{\infty}=\pnorm{ G}{\infty}\;.
\end{equation}
Finally we are left with 
\begin{equation*}
\cqfi{\mc{U}_{G_{\mathrm{ext}}}\otimes \idc}{\theta} \leq 4 \pnorm{ G}{\infty}^2 \;.
\end{equation*}

\end{proof}

\subsection{Centering extended Hamiltonians}
 We  now combine the result on centered phase shift and  the conservation of the channel QFI over a  $\theta$-linear shift of the generator and apply them to Hamiltonian-extended phase shift. We consider the $\theta$-linear shifted Hamiltonian-extended phase shift
\begin{equation}
 G_{\mathrm{ext},\alpha}(\theta)= \theta (G\otimes \id  +\alpha \id \otimes \id)+ H_{\mathrm{int}}=\theta G_{\alpha}\otimes \id+ H_{\mathrm{int}}\;,
 \end{equation} 
with the shifted phase shift generator 
 \begin{equation}
G_{\alpha}=G  +\alpha \id\;.
 \end{equation}
 By choosing $\alpha=\alpha_c=\frac{g_1+g_d}{2}$ we obtain a centered
 generator $\tilde{G}_{\alpha_c}\equiv G_{\alpha_c}$ (and the
 corresponding Hamiltonian-extended centered phase shift
 $G_{\mathrm{ext},\alpha_c}$), with extremal eigenvalues
 $\tilde{g}_1=g_1-\frac{g_1+g_d}{2}=\frac{g_1-g_d}{2}$ and
 $\tilde{g}_d=g_d-\frac{g_1+g_d}{2}=\frac{g_d-g_1}{2}$. We thus have
 $\tilde{g}_1=-\tilde{g}_d$ showing that the generator is indeed
 centered.  

\subsection{Proof of main theorem}
We have now all ingredients to prove theorem \ref{thm:main}.
\begin{proof}
We start by the channel QFI of the extended phase shift channel $\cqfi{\mc{U}_{G_{\mathrm{ext}}}}{\theta}$. This quantity is  bounded by its channel extension,
\begin{equation*}
\cqfi{\mc{U}_{G_{\mathrm{ext}}}}{\theta} \leq \cqfi{\mc{U}_{G_{\mathrm{ext}}} \otimes \idc}{\theta}\;.
\end{equation*}
Using the fact that the channel QFI of the extended channel is
invariant under a $\theta$-linear shift,
eq.\eqref{eq:shifting_unitary}, we have 
\begin{equation*}
\cqfi{\mc{U}_{G_{\mathrm{ext}}} \otimes \idc}{\theta}=\cqfi{\mc{U}_{G_{\mathrm{ext},\alpha_c}}\otimes \idc}{\theta}\;,
\end{equation*}
where $\alpha_c$ is chosen such that $\tilde{G}_{\alpha_c} $ is a
centered generator. 

Because $\mc{U}_{G_{\mathrm{ext},\alpha_c}}$ is a
Hamiltonian-extended phase shift channel, we know that the channel QFI
of its extension is bounded by the norm of the corresponding
generator, eq.\eqref{eq:cqfi_ext_G},  
\begin{equation*}
\cqfi{\mc{U}_{G_{\mathrm{ext},\alpha_c}}\otimes \idc}{\theta} \leq 4 \pnorm{\tilde{G}_{\alpha_c}}{\infty}^2\;.
\end{equation*}
Since $\tilde{G}_{\alpha_c} $ is a centered generator, the channel QFI
of its corresponding channel
$\mc{U}_{\theta \tilde{G}_{\alpha_c}}$ is proportional to the
norm of the generator (see eq.\eqref{eq:cqfi_centered_phase_shift}), giving 
\begin{equation*}
 4 \pnorm{\tilde{G}_{\alpha_c}}{\infty}^2  =\cqfi{\mc{U}_{\theta \tilde{G}_{\alpha_c}}}{\theta} \;.
\end{equation*}
We have already shown  that the channel QFI of  a unitary channel is
invariant under a $\theta$-linear shift, eq.\eqref{eq:shifting_unitary_non_ext}
Thus we have
\begin{equation*}
\cqfi{\mc{U}_{\theta \tilde{G}_{\alpha_c}}}{\theta}= \cqfi{\mc{U}_{\theta G}}{\theta}\;.
\end{equation*}
Eventually, we have shown that
\begin{equation}
\cqfi{\mc{U}_{G_{\mathrm{ext}}}}{\theta} \leq \cqfi{\mc{U}_{\theta G}}{\theta}\;.
\end{equation} 
\end{proof}

\section{Discussion and conclusion} 
Most of the work in quantum-enhanced measurements has focused on
using entanglement in order to improve the scaling of the sensitivity
with the number of probes. An alternative to the  experimentally problematic
multi-partite entanglement of a large number of probes is to use more
general 
Hamiltonians, in particular Hamiltonians allowing for
interactions between the subsystems, an
approached known as ``non-linear schemes'' 
\cite{luis_nonlinear_2004,beltran_breaking_2005,luis_quantum_2007}.
It was realized that the  
parameter characterizing a $k$-body interaction strength can be estimated
with an uncertainty (measured by the standard deviation) that scales
as $1/N^{k-1/2}$ for an initial 
product state of all $N$ probes, and $1/N^{k}$ if the initial state
is entangled.  Similarly, ``coherent averaging'' was introduced
and examined in detail in 
\cite{fraisse_coherent_2015}, based on earlier work on ``decoherence-enhanced
measurements'' \cite{braun_heisenberg-limited_2011}.  In both cases,
the Hamiltonian has the 
structure typical of open quantum systems, $H=H_{\rm sys}+H_{\rm int}
+H_{\rm env}$, where $H_{\rm sys}$ corresponds to the $N$ non-interacting
subsystems introduced above, $H_{\rm env}$ describes an environment
for the decoherence-enhanced measurements or a ``quantum bus'' for
coherent averaging.  Also there it was found that in a certain
parameter regime interaction strength
can be measured with Heisenberg-limited scaling \emph{i.e.}~a standard
deviation scaling as $1/N$ --- when measuring the 
quantum bus  
and using an initial product state.  However, Heisenberg-limited
scaling of the uncertainty of the 
original parameter $\theta$ coded in $H_{\rm sys}$ could 
only be achieved when measuring the whole
system, \emph{i.e.}~system plus quantum 
bus.  
\\
The results of the present work allow us to make strong statements for
the quantum enhancements possible in such schemes based on
more general Hamiltonians: First, we considered QFI itself rather than
its scaling with $N$; and secondly, we obtained bounds for the QFI
corresponding to the original parameter to be estimated rather than for
new parameters that characterize the interaction strength to the
ancilla system introduced.  Our theorem shows that the uncertainty of
the estimation of the original parameter of a unitary phase shift
channel cannot be reduced by an arbitrary Hamiltonian extension to a
larger system.  This implies in particular for the non-linear schemes
that the $1/N$ scaling of the standard deviation of the estimate of
the original parameter of the phase shift channel (\ie/~the HL
obtained when using a highly entangled state of all probes) cannot
be improved 
upon by introducing $k$-body interactions.  Also for coherent
averaging or decoherence-enhanced measurements one cannot beat the HL
scaling of the estimation of the 
level spacing of the probes that one can achieve at least
theoretically by using a maximally entangled state of the $N$ probes
and no ancilla. Nevertheless,
both non-linear schemes and coherent
averaging still do have their interest:  Sometimes it is important to
know the precision with which an interaction can be measured (e.g.~the
gravitational constant \cite{braun_coherently_2014}), and it is
interesting  
that interactions can be measured more precisely than a phase shift
for a large number of probes.  Similarly, for coherent averaging,  
it is important that in certain parameter regimes HL scaling of the
uncertainty of the original parameter (that characterizes e.g.~level
spacings of the probes) can 
be achieved  with an initial product state of the probes, whereas
HL scaling without the coherent averaging procedure requires a highly
entangled initial state.

 Our results were obtained  for the estimation of a phase shift
obtained from a generator with bounded spectrum. 
For 
more complex dependences of the Hamiltonian on the parameter to be
estimated, the
question is still open.  A simple generalization is possible, however, when the 
Hamiltonian $H(\theta)$ and its derivative
$\diff H(\theta)/\diff \theta = \dot{H}(\theta)$ commute: $\com{\dot{H}(\theta)}{H(\theta)}=0$. Then 
theorem \ref{thm:main} is directly generalized by replacing $G$ with
$\dot{H}(\theta)$.   \\

For generators with unbounded spectrum (e.g.~the generator of a phase
shift in one arm of a Mach-Zehnder interferometer, which is simply the
photon number in that mode), the maximum variances \eqref{lem1.1} and
\eqref{lem1.2} are formally
infinite.  Our theorem is still useful in such a context if we
introduce a cut-off $\hat{g}$ in the spectrum. If $\hat{g}$ remains
finite, \emph{i.e.}~$\hat{g} \in [g_{\mathrm{min}}, \infty[$, we are
left with a bounded operator and then our theorem applies. The cut-off
$\hat{g}$ can be made arbitrarily large, which is enough for typical
physical applications of quantum metrology. 
Finding
the maximal possible QFI is not the end of the road either: One would
like to know  
the optimal state, and 
also the optimal POVM (which we do not discuss here). 
Another question that we have left open is whether the bound derived
here is always reachable.

{\bf Acknowledgments:} 
We gratefully acknowledge useful correspondence with
  A.~Fujiwara.

\bibliography{mybibs_bt}

\end{document}